\documentclass[superscriptaddress, reprint, amsmath, amssymb, aps, pra, floatfix]{revtex4-2}

\usepackage[colorlinks=true]{hyperref}
\usepackage{graphicx}
\usepackage{dcolumn}
\usepackage{bm}
\usepackage{orcidlink}
\usepackage{physics}
\usepackage{color}
\usepackage{amsthm}
\usepackage{multirow}
\newtheorem{thm}{Theorem}

\definecolor{purple}{rgb}{0.63,0,1}

\definecolor{pink}{rgb}{1,0,0.9}

\begin{document}

\title{Spectral gaps of two- and three-dimensional many-body quantum systems in the thermodynamic limit}

\author{Illya V. Lukin\orcidlink{0000-0002-8133-2829}}
\email{illya.lukin11@gmail.com}
\affiliation{Karazin Kharkiv National University, Svobody Square 4, 61022 Kharkiv, Ukraine}
\affiliation{Akhiezer Institute for Theoretical Physics, NSC KIPT, Akademichna 1, 61108 Kharkiv, Ukraine}

\author{Andrii G. Sotnikov\orcidlink{0000-0002-3632-4790}}
\affiliation{Karazin Kharkiv National University, Svobody Square 4, 61022 Kharkiv, Ukraine}
\affiliation{Akhiezer Institute for Theoretical Physics, NSC KIPT, Akademichna 1, 61108 Kharkiv, Ukraine}

\author{Jacob M. Leamer\orcidlink{0000-0002-2125-7636}}
\affiliation{Department of Physics and Engineering Physics, Tulane University, 6823 St. Charles Ave., New Orleans, LA 70118, USA}\affiliation{Center for Computing Research, Sandia National Laboratories, Albuquerque, New Mexico 87185, USA}

\author{Alicia B. Magann\orcidlink{0000-0002-1402-3487}}
\affiliation{Center for Computing Research, Sandia National Laboratories, Albuquerque, New Mexico 87185, USA}

\author{Denys I. Bondar\orcidlink{0000-0002-3626-4804}}\email{dbondar@tulane.edu}
\affiliation{Department of Physics and Engineering Physics, Tulane University, 6823 St. Charles Avenue, New Orleans, Luisiana 70118, USA}

\begin{abstract}

We present an expression for the spectral gap, opening up new possibilities for performing and accelerating spectral calculations of quantum many-body systems. We develop and demonstrate one such possibility in the context of tensor network simulations. Our approach requires only minor modifications of the widely used simple update method and is computationally lightweight relative to other approaches. We validate it by computing spectral gaps of the 2D and 3D transverse-field Ising models and find strong agreement with previously reported perturbation theory results.
\end{abstract}

\maketitle

\section{Introduction}
Understanding the behavior of quantum many-body systems in 2D and 3D lies at the heart of contemporary physics. One of the key enigmas in this domain is the computation of the spectral gap, i.e., the energy difference between the ground state and the first excited state, as the spectral gap serves as a key parameter characterizing the intrinsic properties of various quantum phenomena, including superconductivity, superfluidity, spin liquids~\cite{han_fractionalized_2012}, and quantum annealing~\cite{albash_adiabatic_2018}.

Here, we introduce a numerical method designed to determine the spectral gap in quantum systems. This approach is computationally lightweight and seamlessly integrates with widely used imaginary time propagators, enabling its broad applicability across diverse quantum systems. The computational efficiency of the method positions it as a versatile tool to unravel the essential characteristics of quantum systems. 

To validate this numerical approach, we apply it to the well-known 2D transverse-field Ising model. The obtained results exhibit excellent agreement with existing analytical solutions derived through perturbation theory, establishing the reliability of the method. We then expand our calculations to encompass the 1D Haldane chain, as well as the 3D transverse-field Ising model. This latter challenge was previously considered beyond the reach of existing computational techniques, and as a consequence, estimates of the spectral gap in 3D have, until now, been confined to series expansions~\cite{weihong_series_1994, pishtshev_new_2007}. We numerically benchmark the results of our method against these results, showcasing the broad applicability and computational prowess of this method in addressing intricate quantum systems. 

\section{Theoretical approach}
\subsection{Main theorem}

We begin by stating the main analytical result.
\begin{thm}
    Let $H$ be a self-adjoint Hamiltonian such that its spectral decomposition reads 
   \begin{align}
       H = \sum_{n=0}^N E_n\Pi_n, \label{eq:h_decomp}
   \end{align} where $\Pi_n$ are orthogonal projectors, i.e., $H \Pi_n = E_n \Pi_n$, $\Pi_n\Pi_m = \delta_{mn}\Pi_n$, and $E_0 < E_1 < E_2 < \cdots < E_N$ are eigenenergies.  Also let \begin{align}\label{eq:imag_time_prop}
       \ket{\phi(\tau)} = \mathcal{N} e^{-\tau H}\ket{\phi(0)} = \mathcal{N}\sum_{n=0}^N e^{-\tau E_n}\Pi_n\ket{\phi(0)},
   \end{align}
   denote a state $\ket{\phi(0)}$ propagated in imaginary time $\tau$ by preserving the norm. Here, $\mathcal{N}$ is the normalization constant ensuring that $\braket{\phi(\tau)}{\phi(\tau)} = 1$ for all $\tau$.
   
   For a self-adjoint observable $O$, if $\bra{\phi(0)}\Pi_0 O\Pi_1\ket{\phi(0)}$ is not a real number, i.e.,
   \begin{align}\label{eq:condition_for_first_spectral_gap}
        \bra{\phi(0)}\Pi_0 O\Pi_1\ket{\phi(0)} \neq \bra{\phi(0)}\Pi_1 O\Pi_0\ket{\phi(0)},    
   \end{align}
   then as $\tau \to \infty$,
   \begin{align}\label{eq:first_spectral_gap}
       \ln\big|\bra{\phi(\tau)} [H,O] \ket{\phi(\tau)} \big|
        &= -\tau \Delta + \mathcal{O}(1), \\
        \Delta &= E_1-E_0. \notag 
   \end{align}
    Furthermore, if an observable $O$ is such that $ \bra{\phi(0)}\Pi_0 O\Pi_1\ket{\phi(0)}$ is real, but  $\bra{\phi(0)}\Pi_0 O\Pi_2\ket{\phi(0)}$ is non real, i.e.,
\begin{align}\label{eq:condition_for_second_spectral_gap}
        \bra{\phi(0)}\Pi_0 O\Pi_1\ket{\phi(0)} = \bra{\phi(0)}\Pi_1 O\Pi_0\ket{\phi(0)} \mbox{ and}\notag\\
        \bra{\phi(0)}\Pi_0 O\Pi_2\ket{\phi(0)} \neq \bra{\phi(0)}\Pi_2 O\Pi_0\ket{\phi(0)},
\end{align}
then as $\tau \to \infty$,
\begin{align}\label{eq:second_spectral_gap}
       \ln\big|\bra{\phi(\tau)} [H,O] \ket{\phi(\tau)} \big|
        = -\tau(E_2-E_0) + \mathcal{O}(1).
\end{align}
\end{thm}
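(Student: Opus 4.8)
The plan is to work directly in the spectral decomposition of $H$ and isolate the slowest-decaying exponential in $\tau$. Writing $\ket{c_n}\equiv\Pi_n\ket{\phi(0)}$, so that $e^{-\tau H}\ket{\phi(0)}=\sum_n e^{-\tau E_n}\ket{c_n}$ and $\mathcal{N}^{-2}=\sum_n e^{-2\tau E_n}\lVert c_n\rVert^2$, I would first use $\bra{c_m}H=E_m\bra{c_m}$ and $H\ket{c_n}=E_n\ket{c_n}$ to get $\bra{c_m}[H,O]\ket{c_n}=(E_m-E_n)\bra{c_m}O\ket{c_n}$, and hence
\begin{align}
\bra{\phi(\tau)}[H,O]\ket{\phi(\tau)} &= \mathcal{N}^2\sum_{m,n}(E_m-E_n)\,e^{-\tau(E_m+E_n)}\,\bra{c_m}O\ket{c_n}.
\end{align}
The diagonal terms vanish, and pairing $(m,n)$ with $(n,m)$ together with $\bra{c_n}O\ket{c_m}=\overline{\bra{c_m}O\ket{c_n}}$ (self-adjointness of $O$) collapses the sum to $\mathcal{N}^2\sum_{m<n}2i\,(E_m-E_n)\,e^{-\tau(E_m+E_n)}\,\mathrm{Im}\bra{c_m}O\ket{c_n}$; this is purely imaginary, consistent with $[H,O]$ being anti-self-adjoint.

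Next I would control the $\tau\to\infty$ asymptotics of this expression. Either hypothesis forces $\Pi_0\ket{\phi(0)}\neq0$ — a vanishing $\ket{c_0}$ would make $\bra{\phi(0)}\Pi_0 O\Pi_k\ket{\phi(0)}=0$, which is real — so $\mathcal{N}^2=e^{2\tau E_0}\lVert c_0\rVert^{-2}\big(1+\mathcal{O}(e^{-2\tau(E_1-E_0)})\big)$. Among the pairs $m<n$ the exponent $E_m+E_n-2E_0$ is minimized uniquely by $(0,1)$ with value $\Delta=E_1-E_0$, the next smallest value $E_2-E_0$ being attained by $(0,2)$, while any pair with $m\geq1$ contributes an exponent of at least $E_1+E_2-2E_0>E_2-E_0$. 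Hence $\bra{\phi(\tau)}[H,O]\ket{\phi(\tau)}=e^{-\tau\Delta}\big(A_1+o(1)\big)$ with $A_1=-2i\,\Delta\,\mathrm{Im}\bra{c_0}O\ket{c_1}/\lVert c_0\rVert^2$. Under \eqref{eq:condition_for_first_spectral_gap} one has $A_1\neq0$, so $\ln\big|\bra{\phi(\tau)}[H,O]\ket{\phi(\tau)}\big|=-\tau\Delta+\ln|A_1+o(1)|=-\tau\Delta+\mathcal{O}(1)$, which is \eqref{eq:first_spectral_gap}. Under \eqref{eq:condition_for_second_spectral_gap} the $(0,1)$ term vanishes identically, the leading surviving term is the $(0,2)$ one, and repeating the estimate with $A_2=-2i\,(E_2-E_0)\,\mathrm{Im}\bra{c_0}O\ket{c_2}/\lVert c_0\rVert^2\neq0$ yields \eqref{eq:second_spectral_gap}.

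The computation is largely bookkeeping, and the step I expect to need the most care is the expansion of the normalization $\mathcal{N}$: one must check that its subleading $\tau$-dependence decays strictly faster than the ratio of the two leading exponentials in the numerator, so that it is absorbed into the $\mathcal{O}(1)$ term (indeed that term converges to $\ln|A_1|$). A secondary subtlety is verifying that the prefactor $A_1$ — or $A_2$ in the second case — is genuinely nonzero; this is precisely the content of the non-reality hypotheses, and it also explains why some complexity in $O$ or $\ket{\phi(0)}$ is indispensable, since for a real Hamiltonian with real $O$ and real $\ket{\phi(0)}$ the commutator expectation vanishes for every $\tau$.
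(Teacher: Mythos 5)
Your proposal is correct and follows essentially the same route as the paper's proof: expand $\bra{\phi(0)}e^{-\tau H}[H,O]e^{-\tau H}\ket{\phi(0)}$ in the spectral basis, isolate the slowest-decaying $(0,1)$ (or $(0,2)$) contribution, and combine it with the asymptotics $\mathcal{N}^2 = e^{2\tau E_0}\mathcal{O}(1)$, which is justified because the non-reality hypothesis forces $\Pi_0\ket{\phi(0)}\neq 0$. Your explicit pairing of $(m,n)$ with $(n,m)$ into $2i\,\mathrm{Im}\bra{c_m}O\ket{c_n}$ and the check that $A_2\neq 0$ in the second case are just slightly more explicit versions of steps the paper performs (or leaves implicit), not a different argument.
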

\begin{proof}
We begin 
\begin{align}\label{eq:starting_point}
    \bra{\phi(0)} & e^{-\tau H}[H,O] e^{-\tau H}\ket{\phi(0)}  \notag\\
    =& \sum_{l,k = 0}^N e^{-\tau (E_l + E_k) }\bra{\phi(0)} \Pi_l (HO - OH) \Pi_k \ket{\phi(0)} \notag\\
    =& \sum_{l,k = 0}^N e^{-\tau (E_l + E_k) } (E_l - E_k) \bra{\phi(0)} \Pi_l O \Pi_k \ket{\phi(0)}.
\end{align}
Under the condition~\eqref{eq:condition_for_first_spectral_gap}, we get
\begin{align}\label{eq:Phi0adAsympt}
     \bra{\phi(0)} & e^{-\tau H}[H,O] e^{-\tau H}\ket{\phi(0)}  \notag\\
    =& e^{-\tau (E_0 + E_1) } \Big[(E_0 - E_1) \bra{\phi(0)} \Pi_0 O \Pi_1 \ket{\phi(0)} \notag\\
    & + (E_1 - E_0) \bra{\phi(0)} \Pi_1 O \Pi_0 \ket{\phi(0)} \Big] \notag\\
    & \qquad + \mathcal{O}\left( e^{-\tau (E_0 + E_2) } \right) \notag\\
    =& (E_1 - E_0) e^{-\tau (E_0 + E_1) } \bra{\phi(0)} \Pi_1 O \Pi_0 \notag\\
    &- \Pi_0 O \Pi_1 \ket{\phi(0)}\left[ 1 + \mathcal{O}\left( e^{-\tau (E_2 - E_1) } \right)\right].
\end{align}

Condition~\eqref{eq:condition_for_first_spectral_gap} also implies that the state $\ket{\phi(0)}$ has a population in the ground state, i.e.,
\begin{align}
    & \bra{\phi(0)}\Pi_1 O\Pi_0\ket{\phi(0)} \neq 0 \Longrightarrow
    \Pi_0\ket{\phi(0)} \neq \ket{\varnothing} \Longrightarrow \notag\\
    & \bra{\phi(0)} \Pi_0 \ket{\phi(0)} \neq 0,
\end{align}
where $\ket{\varnothing}$ denotes the null vector. With this in mind, we can estimate the normalization constant in Eq.~\eqref{eq:imag_time_prop} as
\begin{align}\label{eq:NAsympt}
    \mathcal{N}^2 &= \left[\bra{\phi(0)} e^{-2\tau H} \ket{\phi(0)} \right]^{-1} \notag\\
        &= \left[e^{-2\tau E_0} \bra{\phi(0)} \Pi_0 \ket{\phi(0)} + \mathcal{O}\left( e^{-2\tau E_1}\right) \right]^{-1} \notag\\
        &= e^{2\tau E_0} \mathcal{O}(1), \qquad \tau \to \infty.
\end{align}
Using Eqs.~\eqref{eq:Phi0adAsympt} and \eqref{eq:NAsympt}, we obtain
\begin{align}
    \ln & \big|\bra{\phi(\tau)} [H,O] \ket{\phi(\tau)} \big| \notag\\
    &= \ln \mathcal{N}^2 + \ln \left| \bra{\phi(0)}  e^{-\tau H}[H,O] e^{-\tau H}\ket{\phi(0)} \right| \notag\\
    & = 2 \tau E_0 + \mathcal{O}(1) -\tau (E_0 + E_1) \notag\\
    &= -\tau (E_1 - E_0) + \mathcal{O}(1),
\end{align}
thereby completing the proof of Eq.~\eqref{eq:first_spectral_gap}.

Under the condition~\eqref{eq:condition_for_first_spectral_gap}, the leading term in expansion~\eqref{eq:starting_point} reads
\begin{align}
    \bra{\phi(0)} & e^{-\tau H}[H,O] e^{-\tau H}\ket{\phi(0)} = \mathcal{O}(1) e^{-\tau (E_0 + E_2) },
\end{align}
whence Eq.~\eqref{eq:second_spectral_gap} follows.
\end{proof}

We note that Theorem~1 is a modification of the theorem in Ref.~\cite{leamer2023spectral} and that previously, other works have targeted the development of similar, albeit approximate, spectral gap expressions (see, e.g., Refs.~\cite{koh_effects_2016, hlatshwayo2023quantum}). It has also been long recognized that the energies of the low lying excited states can be extracted via the multi-exponential fitting of a correlation function (see, e.g., Refs.~\cite[Sec.VII.C]{ceperley_calculation_1988} and \cite{caffarel_development_1988}). However, such a fit is numerically unstable. Importantly, Eq.~\eqref{eq:first_spectral_gap} does not suffer from this drawback.

In the thermodynamic limit, quantum systems (including the 2D and 3D transverse-field Ising models studied below) often have a continuum spectrum in addition to discrete energies. Equation~\eqref{eq:first_spectral_gap} is also applicable in such systems; $E_1$ is either the first excited state, if the system has at least two discrete energy levels, or the infimum of the continuous spectrum, if there is only one bound state.

A practical application of Eq.~\eqref{eq:first_spectral_gap} to estimate the spectral gap requires a numerical scheme for imaginary time evolution to calculate the ground state. There are numerous computational techniques that could be considered for this purpose, e.g., tensor networks, quantum Monte Carlo \cite{RevModPhys.73.33}, and even quantum computing-based methods~\cite{motta2020determining, russo_evaluating_2021, gnatenko_detection_2022, gnatenko_energy_2022, stroeks_spectral_2022, leamer2024quantum}, and exploration of each would constitute valuable future work. 

Here, we specifically consider imaginary time evolution within tensor networks. Tensor networks, as detailed in~\cite{Orus2019review, 2021_Cirac, 2023_Banuls, Okunishi_2022}, are an apt wave-function ansatz for capturing the entanglement structure of ground states. This approach is particularly effective in 2D~\cite{2021_Bruognolo} and in 1D using time-evolving block decimation~\cite{2004_Vidal}. 

For noncritical phenomena in 1D, the matrix product states (MPS) ansatz~\cite{SCHOLLWOCK201196}, integral to the density matrix renormalization group (DMRG)~\cite{1992_White}, is widely utilized. Its extension to 2D and 3D systems is achieved through projected entangled pair states (PEPS)~\cite{2008_Verstraete, 2021_Cirac, 2021_Bruognolo} and iPEPS--the infinite-size limit of PEPS. We note that iPEPS have proven effectiveness in capturing strong correlations in magnetic systems~\cite{2012_Corboz, 2013_Corboz}, fermionic models~\cite{2014_Corboz, 2015_Corboz}, topological spin liquids~\cite{2022_Hasik}, finite temperature systems~\cite{2020_Jahromi, 2015_Czarnik}, time evolution~\cite{2019_Czarnik}, and excited states~\cite{2019_Vanderstraten_PEPS}.

Here, we apply the simple update method~\cite{2008_Jiang, 2021_Bruognolo} for the imaginary time evolution of iPEPS and observables calculations~\cite{2019_Jahromi, 2023_Tindall}. We calculate the spectral gap via Eq.~\eqref{eq:first_spectral_gap} by tracing an expectation value during the imaginary time evolution.

\subsection{Transverse-field Ising model in 2D and 3D}\label{subsec:2b}

The Hamiltonian has the following form:
\begin{equation}
    H = -J\sum_{\langle ij \rangle} \sigma_{i}^{z} \sigma_{j}^{z} - g \sum_{i} \sigma_{i}^{x},
\end{equation}
where $i$ and $j$ label sites on the square or cubic lattice, $\langle ij \rangle$ restricts summation over the nearest neighbor pairs, and $\sigma^{x}_{i}$ and $ \sigma^{z}_{i}$ are the conventional Pauli matrices acting on the site $i$. We focus on the case of ferromagnetic coupling ($J > 0$).
At zero temperature the system can be in two different phases: symmetry-broken ferromagnetic phase at small $g/J$ and disordered paramagnetic phase at large $g/J$. In 2D, the phase transition occurs at $g/J = 3.04438(2)$ \cite{2002_Blote}, while in 3D, the estimates are less precise,  $g/J \approx 5.29$ \cite{2013_latorre}.  We will use the following parametrization:  in the ferromagnetic phase, we set $J=1$ and measure the gap in units of $J$. In the paramagnetic phase, we instead set $g=1$.

The excitations in the model have the following structure: in the ferromagnetic phase, the vacuum is either all spins are aligned either up or down.  An excitation in this phase is created by flipping the spin at a single site. In 2D, flipping one spin creates four domain walls. As a result, in the case of $g = 0$, the spectral gap is $\Delta = 8J$. The series expansion up to $\mathcal{O}(g^{20})$ for the spectral gap in the ferromagnetic phase in 2D can be found in Ref.~\cite{Oitmaa_1991}.

In the noninteracting case where $J = 0$, all spins are aligned along the positive $x$ direction in the ground state. The elementary excitation corresponds to the spin flip on one site, with the excitation energy $\Delta = 2 g$, or simply $\Delta = 2$ due to the chosen convention in the paramagnetic phase. The series expansion up to $\mathcal{O}(J^{13})$ for the spectral gap in the paramagnetic phase can be found in Ref.~\cite{He_1990} and Ref.~\cite{weihong_series_1994} in 2D and 3D, respectively.

It is important to note that in both the ferromagnetic and paramagnetic phases, the excitation can be obtained from the ground state by applying the local operator $\sigma^{y}_{i}$, since this operator reverses the magnetization along both the $z$ and $x$ axes. More generally, the operator $\sum_{i} \sigma_{i}^{y}$ is expected to exhibit a nonzero overlap between the ground state and the first excited state, regardless of the value of $g/J$. Furthermore, since $\sigma_{i}^{y}$ contains the purely imaginary matrix elements and we take the real valued initial function $\ket{\phi(0)}$, the operator $O = \sum_{i} \sigma_{i}^{y}$ satisfies the condition~\eqref{eq:condition_for_first_spectral_gap}, thereby validating application of Eq.~\eqref{eq:first_spectral_gap} for the spectral gap.

\subsection{Gapped topological system: 1D Haldane chain}
\label{subsec:2c}

The 1D Haldane model can be viewed as the one-dimensional antiferromagnetic Heisenberg chain with spin $S=1$. The model Hamiltonian is
\begin{equation}
    H = \sum_{i} \vec{S}_{i} \cdot \vec{S}_{i+1}.
\end{equation}
It was conjectured by Haldane that this system is in the gapped phase \cite{HALDANE1983464, Haldane83}. This was later confirmed by extensive numerical calculations \cite{Nakano_2009, Golinelli_1994, White_1993, Todo_2001} and also by the solution of the Affleck--Kennedy--Lieb--Tasaki (AKLT) model~\cite{AKLT}, which is in the same phase. Furthermore, the gapped phase is nontrivial---this is the symmetry-protected topological order, which is protected simultaneously by several symmetries \cite{Pollmann_2010}. In the tensor network calculations these topological properties of the model expose themselves as degeneracies in the entanglement spectra \cite{Pollmann_2010}, which we also observe in our calculations.

For the AKLT model the structure of excitations is approximately known, since these are determined by the spin triplets on the lattice bonds. The excitations (for the AKLT model) can be created by the operator $O = \sum_{i} S_{i}^{y} S_{i+1}^{z}$ (note that the additional higher-energy excitations are also created, but they vanish quickly in imaginary-time evolution). We employ the same $O$ for the gap extraction in the Haldane chain, since the Haldane chain and AKLT models are continuously connected. 

\subsection{Spectral gap calculations via tensor networks}
\label{subsec:2d}

There are several different ways to calculate the gap and excitation spectra with tensor networks. The easiest approach is to add a penalizing ground state projector to the Hamiltonian and then re-run the ground-state calculation~\cite{2012_Stoudenmire, Banuls2013}. This approach works well for 1D systems, where the DMRG optimization can be efficiently employed. Another strategy is based on the tensor-network approach for the excitation ansatz~\cite{2012_Haegeman, 2013_Haegeman, 2018_Vanderstraeten, 2019_Vanderstraeten, 2013_Draxler, 2020_Vanderstraeten_spinons}.
This approach allows to calculate not only the gap and lowest excited state, but also the dispersion relation, certain two-particle bound states~\cite{2015_Vanderstraeten_bound}, the scattering matrix~\cite{2014_Vanderstraeten}, and topological excitations~\cite{2018_Zauner, 2019_Vanderstraeten}. This approach can also be applied to 2D systems on cylinders~\cite{2021_Damme}, helixes~\cite{zhang2023} or in infinitely extended 2D systems with the iPEPS excitation ansatz~\cite{2015_Vanderstraeten, 2019_Vanderstraten_PEPS, 2020_Ponsioen, 2022_Ponsioen}. There is also an approach based on the Lanzcos algorithm adaptation to the MPS~\cite{baker2023direct, 2014_Dolgov}. Note that the excitation energies can also appear in the projected Hamiltonian problems in the DMRG sweeps~\cite{2017_Chepiga} and may be connected with the ground state transfer matrix spectrum~\cite{2014_Zauner, 2023_Eberharter}. For certain class of models it was also proposed to estimate the gap by studying phase transitions upon the introduction into Hamiltonian of additional terms commuting with the Hamiltonian~\cite{2013_Garcia}. For many-body localized systems there are tailored approaches to study not only the lowest excited states but also eigenstates in the middle of the spectra~\cite{2016_Khemani, 2016_Pollmann}.

The approaches just reviewed are very effective for quasi-1D systems and for certain 2D systems within the variational iPEPS methodology. Unfortunately, for 2D systems the computations are rather demanding, since they require either complex summations of correlation functions with the corner transfer matrix renormalization group (CTMRG)~\cite{2019_Vanderstraten_PEPS, 2020_Ponsioen} or automatic differentiation through CTMRG~\cite{2022_Ponsioen}. These approaches are also currently very difficult to generalize to 3D  and complex graph structures. Here, we show that the spectral gap can be easily extracted via the simplest algorithm of tensor network optimization, which is applicable to arbitrary lattice or network structures. We illustrate this with the spectral gap estimation for the 3D transverse Ising model.

\begin{figure}
    \includegraphics[width= \linewidth]{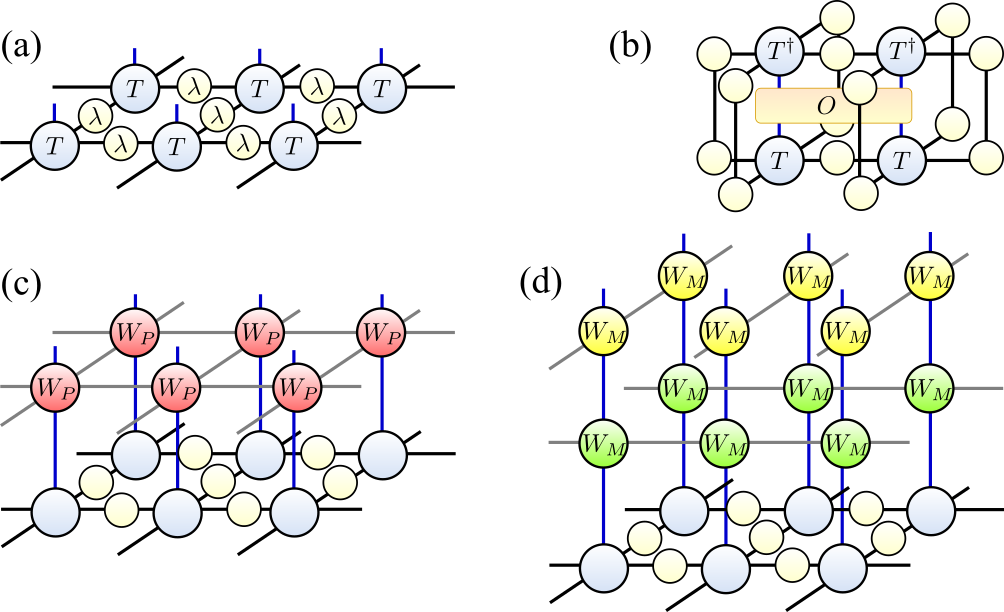}
    \caption{\label{fig:iPEPS}%
        Tensor network computations in 2D: (a) the network diagram of the iPEPS wave function, which consists of the tensors~$T$ placed on the sites of the square lattice. Each tensor has one physical index and four auxiliary virtual indices, which are geometrically contracted with the bond matrices $\lambda$. Both $T$ and $\lambda$ are in the superorthogonal canonical form~\cite{2023_Tindall, 2015_Phien, 2012_Ran}. (b) The simple update approximation to the computation of the expectation value of the two-site operator $O$~\cite{2019_Jahromi}. (c) The PEPO optimization scheme, where we apply the PEPO operator (consisting of elementary tensors $W_P$) to the iPEPS wave function. The PEPO tensors approximate the operator $\exp(-H d\tau)$ and then we truncate iPEPS with a superorthogonalisation~\cite{2015_Phien, 2015_Zaletel}. (d) The iPEPS optimization with consequent applications of horizontal and vertical MPOs (with individual tensors $W_M$). We construct MPO with the $WII$ method~\cite{2015_Zaletel} and also truncate iPEPS according to the superorthogonal canonical form.  
        }
\end{figure}

Our algorithm for calculating the spectral gap runs as follows: We employ the infinite projected entangled pair states of the bond dimension $D$ and optimize the randomly initialized iPEPS wave function (within a predefined periodic unit cell) with the imaginary time evolution (see Fig.~\ref{fig:iPEPS}). We discretize the imaginary time propagation of a step $d\tau$ and apply either the Trotterized gates representing $\exp(-H d\tau)$ or a PEPO-like approximation of the $\exp(-H d\tau)$, constructed via the $WII$ method~\cite{2015_Zaletel}, to the iPEPS. Following the application of these gates/PEPO, the bond dimension $D$ of the iPEPS wave function typically increases and must be truncated back to its original value. This truncation is performed using the simple update method, a cost-effective and widely used approach in iPEPS optimization, especially for gapped nontopological systems~\cite{2008_Jiang,2021_Bruognolo}. 

To determine the gap, it is necessary to find the operator averages of the commutator $[H, O]$ of the Hamiltonian $H$ and the observable $O$  creating excitations. For example, we employ $O = \sum_{i} \sigma^{y}_{i}$ for the Ising model, and $O = \sum_{i} S_{i}^{y} S_{i+1}^{z}$ for the Haldane model. 

The precise computation depends on the exact CTMRG contraction~\cite{baxter1968dimers,1996_Nishino, 2009_Orus, Nishino_1997} at every step of the imaginary time evolution, rendering the simple update method ineffective due to CTMRG's significantly higher computational cost compared to the simple update's truncation procedures. Fortunately, our focus is on gapped systems, where it was recently demonstrated that the simple update method, with sufficiently large bond dimensions ($D$), can accurately determine observables in these systems\cite{2019_Jahromi}. This finding led to the development of the gPEPS strategy, applying the simple update method to optimize the iPEPS wave function and to compute its observables~\cite{2019_Jahromi, 2020_Jahromi} (see also a similar discussion in~\cite{2023_Tindall}). This approach has proven successful for challenging 2D problems, some 3D systems, and for simulating the IBM kicked Ising experiment classically~\cite{2023_Patra_IBM, 2023_tindall_IBM}.

Let us provide some additional comments on different computational schemes within the simple update iPEPS methods:

(1) PEPO scheme: the optimization using the PEPO application, where PEPO approximates the exponent of the Hamiltonian $\exp(-H d\tau)$, and the iPEPS is truncated using the superorthogonalisation canonical form~\cite{2015_Phien, 2012_Ran, 2023_Tindall}.   The unit cell consists of one site. 

(2) MPO scheme: the optimization is performed by the consequent applications of horizontal and vertical MPO to the iPEPS wave function, where the MPOs encode the exponent of the Hamiltonian. The truncation is performed again using the superorthogonalisation.  This approach is faster than the PEPO scheme.  The unit cell consists again of one site.

(3) Gate scheme: the optimization is performed with the Trotterized gates. This is the most widely used scheme in practical calculations. Usually, superorthogonalization is not performed in the truncation procedure, but it can be used as an additional step between gate applications, which allows for the further stabilization of the algorithm.  The unit cell consists of the two different tensors placed in the checkerboard pattern. This method is the most unstable in terms of the gap estimation. In particular, even if the commutator decay is approximately exponential in the approach, its numerical derivative  at $d\tau\sim10^{-3}$ shows sizable fluctuations, which are absent in the two previous schemes. The gap extraction in the approach requires in this case an additional data flattening. 

Below, we employ the simple update (SU) to obtain the expectation value of the commutator $[H, O]$, where $O = \sum_{i} \sigma^{y}_{i}$ is the observable with nonzero overlap between the first excitation and ground state. The commutator expectation value is determined after each imaginary time step $d\tau$.  We then plot the logarithmic quantity $C(\tau) = \ln{|\langle [H, O] \rangle(\tau)|}$ [see also Eq.~\eqref{eq:first_spectral_gap}] and determine the interval of $\tau$ with a linear dependence. Within this interval, we interpolate the function $C(\tau) = C_{0} - \tau \Delta$, which contains the estimate of the spectral gap $\Delta$. 

Note that the proposed approach may encounter difficulties when addressing topological excitations, such as anyons in 2D systems or kinks/domain walls in 1D systems. This challenge arises because the single topological excitations generally cannot emerge from a translationally invariant tensor-network wave function through the action of a local operator. Specifically, it is not feasible to obtain the domain wall energy in the 1D Ising model inside the ferromagnetic phase. Consequently, we conclude that addressing topological excitations necessitates more complex approaches. Such methods are akin to those discussed in Ref.~\cite{2019_Vanderstraeten} for 1D systems.

Below, we present results obtained using three different schemes of the simple update iPEPS optimization: PEPO, MPO, and Trotter gates. Specifically, for the 2D system, we illustrate the iPEPS wave function, the calculation of observables, and the application of PEPO/MPO in Fig.~\ref{fig:iPEPS}. Regarding the application of Trotter gates within the simple update approach, we direct readers to Refs.~\cite{2008_Jiang,2021_Bruognolo} for further details.

\section{Results}
As an initial iPEPS wave function $\ket{\phi(0)}$, we used random product states with bond dimension $D=1$, noting that such initial states lead to better convergence. Figure~\ref{fig:Commutator_decay}(a) displays the dynamics of the commutator expectation value over imaginary time $\tau$, plotted on a logarithmic scale. Computations using all three methods (MPO, PEPO, and Trotterized gates) are shown. It is clear that all methods result in the exponential decay of the commutator (evidenced as a linear region on the graph), with nearly identical slopes, indicative of the excitation gap. However, the stability and accuracy of these methods generally vary. For instance, with the Trotter gates, for large $D$ or, surprisingly, small $d\tau$ we observe instabilities.  Such instabilities, which can occur at arbitrary $\tau$, lead to significant variations in the numerical derivatives.  Consequently, for methods based on Trotter gates, one needs to use a linear regression to  average out the numerical instabilities to extract the spectral gap, which makes it difficult to estimate the error from the time discretization. In contrast, the numerical derivatives obtained with PEPO/MPO methods are always smooth and stable [Fig.~\ref{fig:Commutator_decay}(b)], enabling direct estimation of the gap.
\begin{figure}
    \includegraphics[width= \linewidth]{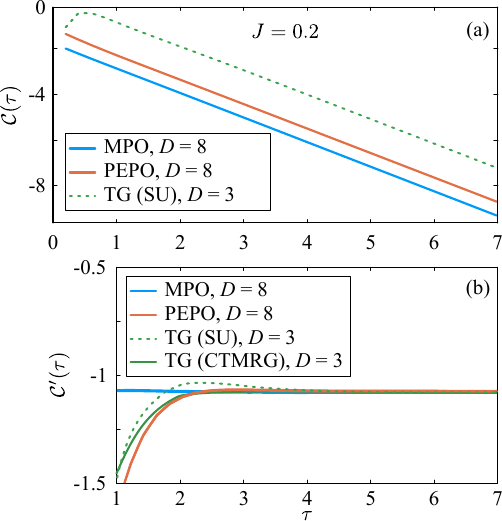}
    \caption{\label{fig:Commutator_decay}%
        (a) The decay of the expectation value ${\cal C}(\tau)=\ln|\langle [H, O] \rangle(\tau)|$ with the imaginary time $\tau$ for the 2D transverse Ising model in the paramagnetic phase with $J = 0.2$, $d\tau=0.2$, and the iPEPS bond dimension $D=8$ for PEPO/MPO and $D=3$ for the Trotter gates (TG). 
        (b) The numerical derivative of the commutator decay for the same parameters and methods, including TG scheme with CTMRG at $\chi=10$ for comparison. 
        }
\end{figure}

According to Fig.~\ref{fig:Commutator_decay}(b), the PEPO and MPO methods estimate the spectral gap $\Delta = 1.074$ (almost insensitive to the step size $d\tau$, with the calculations carried down to $d\tau = 0.002$). As discussed below, the perturbation series expansion yields $\Delta = 1.083$ [see Fig.~\ref{fig:Gap_paramagnetic}(a)]. 
For Trotter gates, $\Delta \approx 1.073$ [see Fig.~\ref{fig:Commutator_decay}(b)], but only for the specific $d\tau = 0.2$; for $d\tau < 0.2$, the results become sensitive to the step size due to significant numerical instabilities, as mentioned in Sec.~\ref{subsec:2d}.
Below, we utilize the MPO method, since it is faster than  PEPO and more reliable than the Trotter gates optimization. We have also calculated the average values with the much more accurate CTMRG approach. The results for the gap agree between the different approaches for the observable calculation, even though the average values of the observables themselves may be underestimated for the SU method of observables calculation.

Next, we benchmark this method of the gap estimation against the series expansions~\cite{He_1990}. In Fig.~\ref{fig:Gap_paramagnetic}(a), we compare the gap values computed via the simple update imaginary time evolution (using MPOs) of the iPEPS wave function with series expansions in $J$ ($g = 1$). In the region of $J$ far from the critical transition point $J_{c} \approx 0.329$, the results from the series expansions and the simple update show a strong agreement. However, in the critical region, the predictions diverge, with the simple update underestimating the transition point, while the series expansions overestimate it. The iPEPS approach becomes unreliable in this region due to the simple update's reliance on mean-field environments, which are ineffective amidst long-range correlations. To achieve more accurate results in the near-critical regime, more complex tensor-network methods should be employed: either gap extraction from the full update evolution~\cite{2008_Jordan} (complemented with CTMRG for calculating averages) or employing some type of variational evolution~\cite{2022_Dzarmaga}. Additionally, we compare these results with $D=5$ calculations, noting differences predominantly near the critical regime. It is important to note that in the critical region, gap estimation using this approach faces challenges: first, the commutator decay tends to be polynomial rather than exponential over the extended periods of $\tau$, and second, the simple update often inaccurately determines the precise position of the transition point, leading to a shifted gap prediction.

In Fig.~\ref{fig:Gap_paramagnetic}(b) we benchmark our numerical results in the ferromagnetic phase against the series expansions~\cite{Oitmaa_1991}. Unfortunately, the predictions from the series expansions are reliable only for $g < 1.5$. Within this range, our results agree well with the series expansions. We also extend our gap prediction up to $g = 2.5$, where our estimates begin to diverge from the more accurate predictions provided by the more advanced (and computationally demanding) variational iPEPS approach detailed in Ref.~\cite{2020_Ponsioen}.
\begin{figure}
    \includegraphics[width= \linewidth]{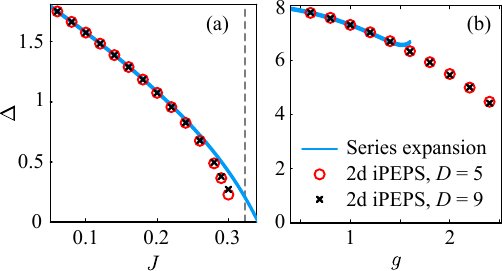} 
    \caption{\label{fig:Gap_paramagnetic}%
       (a) The gap in the paramagnetic phase obtained from the series expansion \cite[Table 1]{He_1990} up to $\mathcal{O}(J^{13})$ and from our approach to the gap estimation (with the MPO method of evolution). The dashed line specifies the phase transition point $J_c=0.329$. 
       (b) The estimated spectral gap in the ferromagnetic phase compared with the series expansion \cite[Table 1]{Oitmaa_1991} up to $\mathcal{O}(g^{20})$. The series diverge at $g \approx 1.5$, and at larger $g$ we show only the iPEPS results.
       }
\end{figure}
\begin{figure}
    \includegraphics[width= \linewidth]{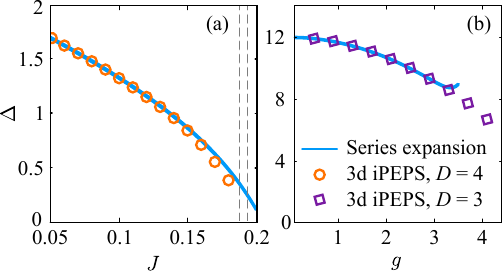}    
    \caption{\label{fig:Gap_3d}%
        The spectral gap estimation for the  3D transverse-field Ising model: (a) in the paramagnetic phase with $D=4$ (circles) and (b) in the ferromagnetic phase with $D=3$ (squares) compared with the series expansion: up to $\mathcal{O}(J^{13})$~\cite[Table 2]{weihong_series_1994} and up to $\mathcal{O}(g^{20})$~\cite[Table 3]{weihong_series_1994}, respectively. The vertical lines indicate the estimates of the critical value $J_c=0.188$~\cite{2013_latorre, braiorr2016phase} and $J_c=0.194$~\cite{weihong_series_1994}.
        } 
\end{figure}

Next, we present the results for the 3D quantum transverse Ising model. We emphasize that variational calculations with iPEPS are not currently available in 3D, while MPS approaches are ineffective due to the rapid growth of the entanglement entropy in 3D systems. The corresponding 3D tensor-network calculations rely primarily on the simple update optimization, accompanied by various methods for subsequent calculation of ground-state observables~\cite{2019_Jahromi, 2020_Jahromi, 2021_Jahromi, 2021_Vlaar}. The 3D Ising model has been previously studied using tensor network methods~\cite{2013_latorre, braiorr2016phase}, making the simple update strategy a natural choice for the gap estimation in these systems. 

In Fig.~\ref{fig:Gap_3d}(a), we compare the 3D iPEPS estimates of the spectral gap with the series expansion from Ref.~\cite{weihong_series_1994} in the paramagnetic regime. The critical parameter $J_{c} = 0.194$ is suggested by the series expansions~\cite{weihong_series_1994}, with an approximate $J_{c} = 0.188$ from previous iPEPS calculations~\cite{2013_latorre, braiorr2016phase}. Our results for the gap value agree with these predictions, as we observe a slowdown in the exponential decay in the regime $J > 0.18$, where the gap estimation becomes challenging. Deep in the paramagnetic phase, our results are consistent with the series expansion predictions. However, in the critical regime, assessing the accuracy of our results is difficult, since the series expansions overestimate the gap, predicting $\Delta_{SE}>0$ at $J=0.2$ (see Fig.~\ref{fig:Gap_3d}), which exceeds the critical point estimated by other series expansions. In Fig.~\ref{fig:Gap_3d}(b) we also show the results in the ferromagnetic phase. As in the 2D case, the series expansions diverge long before the critical point. Hence, we compare our results with the series expansions only for low $g$, where we find agreement between our results and the series expansions. 

So far, our discussion has focused on Ising systems in both symmetry-broken and paramagnetic phases. To verify the applicability of our proposed method to different types of systems, particularly those with symmetry-protected topological order, we have extended our analysis to calculate the gap for the 1D Haldane model, as introduced in Sec.~\ref{subsec:2c}. Our estimated numerical value of the gap, $\Delta = 0.410$, is in close agreement with previously published numerical results, which reported $\Delta = 0.410479(1)$ \cite{Nakano_2009}. The convergence of the numerical derivative of the commutator expectation value is illustrated in Fig.~\ref{fig_5:Haldane}. It is also worth noting that, for 1D systems, the simple update method for calculating observables is exact.
\begin{figure}
    \includegraphics[width= \linewidth]{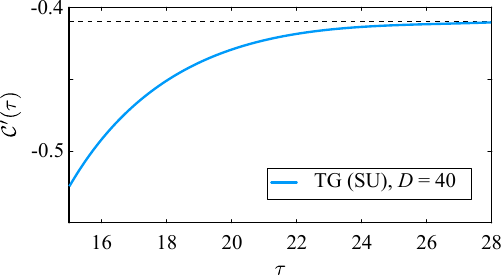}    
    \caption{\label{fig_5:Haldane}%
        The numerical derivative of the expectation value ${\cal C}(\tau)=\ln|\langle [H, O] \rangle(\tau)|$ with the imaginary time $\tau$  for the Haldane model. The operator $O = \sum_{i} S_{i}^{y} S_{i+1}^{z}$.        
        } 
\end{figure}

\section{Conclusions}\label{Sec:Conclusion}

We have presented a tensor network realization of spectral gap calculations via Eq.~\eqref{eq:first_spectral_gap}. Its implementation is straightforward, and only required a few extra lines to be added to an existing code for the simple update method. The runtime of our method scales as $\mathcal{O}(D^{z+1})$, where $z$ is the lattice connectivity (i.e., $z=4,6$ for our 2D and 3D illustrations, respectively). Concretely, it takes about 10 min with MPO on a typical desktop computer to estimate one spectral gap in 3D [pointmarker corresponding to the bond dimension $D=3$ in Fig.~\ref{fig:Gap_3d}(b)]. Our method is complementary to the more powerful but computationally very demanding
variational iPEPS approach~\cite{2020_Ponsioen}, which scales as $\mathcal{O}(D^{10}) - \mathcal{O}(D^{12})$; moreover, the latter method is only applicable to infinite regular 2D lattices and, unlike our approach, cannot be currently extended to 3D.

Looking ahead, there is significant potential for further improvement of the methodology presented. For instance, it can be generalized to arbitrary graphs with low connectivity~\cite{2023_Tindall,2023_tindall_IBM} and to arbitrary unit cells without requiring any symmetries in the system. The use of modified environments, such as cluster environments or nearest-neighbor updates, may lead to more accurate calculations, at the expense of increased computational costs. These improvements would result in more precise calculations, particularly in critical regions. This work is part of a broader program aimed at using the simple update method to study gapped systems, which, in particular, was recently applied to simulate the IBM kicked Ising experiment~\cite{2023_Patra_IBM, 2023_tindall_IBM}.

Finally, we add that although the present work has focused on the spectral gap as a fundamental property in quantum and material sciences, it also plays an important role in vibration analysis, graph theory and network analysis, and data science. Inspired by the interdisciplinary success of tensor networks~\cite{klus2018tensor, gelss2018tensor, klus2019tensor, lucke2022tgedmd}, adaption of our method to these other domains should be a subject of subsequent investigations.

\

\acknowledgments
The authors acknowledge helpful discussions with Andrew Baczewski. I.V.L. and A.G.S. are supported by Office of Naval Research Global (ONRG) Grant No.~N62909-23-1-2088 (program manager Dr.~ Martina Barnas). D.I.B. is supported by Army Research Office (ARO) Grant No.~W911NF-23-1-0288 (program manager Dr.~James Joseph). D.I.B and A.G.S. are also grateful to the Carol Lavin Bernick Faculty Grant Program at Tulane University and the National Science Foundation (NSF) IMPRESS-U Grant No.~2403609. J.M.L. and A.B.M. are supported by Sandia National Laboratories’ Laboratory Directed Research and Development Program under the Truman Fellowship. The views and conclusions contained in this document are those of the authors and should not be interpreted as representing the official policies, either expressed or implied, of NSF, ARO, ONRG or the U.S. Government. The U.S. Government is authorized to reproduce and distribute reprints for Government purposes notwithstanding any copyright notation herein.
This article has been authored by an employee of National Technology \& Engineering Solutions of Sandia, LLC under Contract No. DE-NA0003525 with the U.S. Department of Energy (DOE). The employee owns all right, title and interest in and to the article and is solely responsible for its contents. The United States Government retains and the publisher, by accepting the article for publication, acknowledges that the United States Government retains a nonexclusive, paid-up, irrevocable, world-wide license to publish or reproduce the published form of this article or allow others to do so, for United States Government purposes. The DOE will provide public access to these results of federally sponsored research in accordance with the DOE Public Access Plan~\footnote{\href{https://www.energy.gov/downloads/doe-public-access-plan}{https://www.energy.gov/downloads/doe-public-access-plan}}. This paper describes objective technical results and analysis. Any subjective views or opinions that might be expressed in the paper do not necessarily represent the views of the U.S. Department of Energy or the United States Government.

\bibliography{references}

\end{document}